\documentclass[10pt,conference]{IEEEtran}
\normalsize
\interdisplaylinepenalty=2500
\usepackage{amssymb,amsmath,amsthm,bm}
\usepackage{enumitem}
\usepackage{cite}

\usepackage[ruled]{algorithm}
\usepackage[noend]{algorithmic}
\makeatletter
\newcommand\fs@norules{\def\@fs@cfont{\bfseries}\let\@fs@capt\floatc@ruled
	\def\@fs@pre{}%
	\def\@fs@post{}%
	\def\@fs@mid{\kern3pt}%
	\let\@fs@iftopcapt\iftrue}
\makeatother

\usepackage[utf8]{inputenc}
\usepackage[T1]{fontenc}

\usepackage{graphicx}
\usepackage{xcolor} 
\usepackage{tikz,tikzscale}
\usetikzlibrary{fit, spy, calc, shapes, arrows, positioning, plotmarks, arrows.meta, pgfplots.groupplots, backgrounds}
\usepackage{pgfplots}
\pgfplotsset{compat=newest}
\usepgfplotslibrary{patchplots}

\makeatletter
\let\MYcaption\@makecaption
\makeatother
\usepackage[font=footnotesize]{subcaption}
\makeatletter
\let\@makecaption\MYcaption
\makeatother


%
%

\renewcommand{\vec}[1]{{\bm{#1}}}
\DeclareMathOperator*{\argmin}{argmin}
\DeclareMathOperator*{\argmax}{argmax}
\DeclareMathOperator*{\id}{id}
\DeclareMathOperator*{\NC}{NC}
\usepackage{mathtools}
\newcommand{\defeq}{\vcentcolon=}

\usepackage{etoolbox}
\patchcmd{\thmhead}{(#3)}{#3}{}{}

\newtheorem{theorem}{Theorem}
\newtheorem{lemma}{Lemma}

\newtheorem{collorary}{Corollary}
\newtheorem{definition}{Definition}

\newtheorem{example}{Example}


\newcommand*{\SHORT}{}%
\ifdefined\SHORT
\else
\fi

\begin{document}
\bstctlcite{IEEEexample:BSTcontrol}
\ifdefined\SHORT
	\title{\huge Finite-Precision Implementation of Arithmetic Coding Based Distribution Matchers \vspace{-5pt}}
\else
	\title{On Finite-Precision Arithmetic Implementation of Arithmetic Coding Based Distribution Matchers} 
\fi
\author{ 
	Marcin~Pikus\IEEEauthorrefmark{1}\IEEEauthorrefmark{2},
	Wen~Xu\IEEEauthorrefmark{1}, and
    Gerhard~Kramer\IEEEauthorrefmark{2}\\
	\IEEEauthorblockA{\IEEEauthorrefmark{1}Huawei Technologies Duesseldorf GmbH, Munich Research Center, Riesstr. 25, 80992 Munich, Germany\\
		\IEEEauthorblockA{\IEEEauthorrefmark{2}Institute for Communications Engineering, Technical University of Munich, Arcisstr. 21, 80290 Munich, Germany}
		marcin.pikus@tum.de, wen.dr.xu@huawei.com, gerhard.kramer@tum.de \vspace{-1em}
	}

\thanks{M. Pikus is with the Huawei Technologies Duesseldorf GmbH, D-80992 Munich, Germany,
	 and also with the Institute for Communications Engineering, Technische Universität München, D-80333
	Munich, Germany (e-mail: marcin.pikus@gmail.com).}
\thanks{W. Xu is with the Huawei Technologies Duesseldorf GmbH, D-80992 Munich, Germany (e-mail: wen.xu@ieee.org).}
}
\maketitle


\begin{abstract}
A distribution matcher (DM) encodes a binary input data sequence into a sequence of symbols with a desired target probability distribution. Several DMs, including shell mapping and constant-composition distribution matcher (CCDM), have been successfully employed for signal shaping, e.g., in optical-fiber or 5G. The CCDM, like many other DMs, is typically implemented by arithmetic coding (AC). In this work we implement AC based DMs using finite-precision arithmetic (FPA). An analysis of the implementation shows that FPA results in a rate-loss that shrinks exponentially with the number of precision bits.  Moreover, a relationship between the CCDM rate and the number of precision bits is derived.
\end{abstract}

\ifdefined\SHORT
\else
\begin{IEEEkeywords}
	distribution matching, arithmetic coding, finite-precision arithmetic, probabilistic shaping. 
\end{IEEEkeywords}
\fi

\IEEEpeerreviewmaketitle
\section{Introduction}
A distribution matcher (DM) reversibly maps a sequence $\vec{U}$ of independent and uniformly distributed bits into a sequence $\vec{A}$ of symbols to emulate a memoryless source $P_A$, i.e., the output of the DM approximates a sequence of independent and identically distributed (IID) symbols, each distributed according to $P_A$.  The accuracy of the approximation is measured by the Kullback–Leibler (KL) divergence between the probability distribution of the DM's output and the probability distribution of the IID sequence.  An inverse distribution matcher (DM$^{-1}$) performs the inverse operation recovering $\vec{U}$  from $\vec{A}$. 

DMs can be used in communication systems, such as probabilistic amplitude shaping (PAS) \cite{Boecherer2015}, to adjust the distribution of transmitted symbols to a distribution beneficial for a certain channel, e.g., a distribution achieving the capacity. PAS with a constant-composition distribution matcher (CCDM) \cite{Schulte2015} was recently used for optical-fiber communication \cite{Buchali2015} and proposed for the 5G mobile system \cite{PSCM_Huawei}. DMs can also be interesting for secrecy in communications.

We refer to CCDM and other DMs implemented by arithmetic coding (AC), e.g., the multi-composition DM \cite{Pikus2019} and the multiset-partition DM \cite{Fehenberger2018}, as AC based distribution matchers (AC-DMs). AC is applied in a reverse order for distribution matching, i.e., the AC-DM is implemented by an AC decompresser and the inverse AC-DM by an AC compresser. A direct implementation of AC requires infinite-precision arithmetic (IPA) operations. A large body of research was dedicated to efficient finite-precision arithmetic (FPA) implementation\footnote{Usually using integer operations.} of AC, see e.g. \cite{Rissanen1976, Pasco1976, Rubin1979, Langdon1984, Witten1987}. An AC-DM for CCDM was first proposed by Ramabadran in \cite{Ramabadran1990} for binary inputs and outputs, and by Schulte \& Böcherer in \cite{Schulte2015} for larger output alphabets.
CCDM has good-performance and low-complexity for long output sequences, whereas other solutions, e.g., \cite{Pikus2019, Gultekin2018a, Schulte2018} are usually too complex\footnote{The memory complexity increases at least linearly with the length of the sequence.} to be used with very long sequences. 

In this work, we show how to adapt the technique from \cite{Ramabadran1990} to efficiently implement non-binary AC-DMs. We show that the FPA implementation decreases the input sequence length, and that the decrease shrinks exponentially with the number of precision bits.  We also derive necessary conditions to ensure that an FPA implementation of a CCDM is one-to-one. The one-to-one property is needed for error-free decoding, but proving it is challenging because of many rounding operations performed by the encoding and decoding algorithms. Moreover, we show that the CCDM is not asymptotically optimal in terms of KL divergence when implemented in FPA. This however does not significantly affect performance in practice. 

The work is organized as follows. Sec. \ref{s:dm_fund} introduces distribution matching. Sec. \ref{s:ac_based_dm} and  Sec. \ref{s:finite_prec_impl} describe how AC-DMs are implemented if one could use IPA, and when one uses FPA. Sec. \ref{s:bounding_the_d}  shows how to guarantee that an AC-DM is one-to-one and Sec. \ref{s:eff_ccdm} gives specific results for the CCDM. Sec. \ref{s:results} applies the ideas to several DMs.

We denote random variables (RVs) by uppercase letters, such as $A$, and realizations by lowercase letters, such as $a$. A row vector is denoted by a bold symbol, e.g., $\vec{a}$. The $i$-th entry in the vector $\vec{a}$ is denoted by $a_i$, and a subvector $[a_i,a_{i+1},\cdots, a_j]$ of $\vec{a}$ is denoted by $\vec{a}_i^j$. The length (dimension) of a vector is denoted by $l(\vec{a})$, e.g., we have $\vec{a} = \vec{a}_1^{l(\vec{a})}$. A RV uniformly distributed on a set $\mathcal{A}$ is denoted by $\mathbb{U}_{\mathcal{A}}$, i.e., $S \sim \mathbb{U}_{\mathcal{A}}$ means that $P_{S}(s)=1/|\mathcal{A}|$ for $s \in \mathcal{A}$.

\section{Distribution Matching} \label{s:dm_fund}
A one-to-one block-to-block DM is an injective function $f_{\text{DM}}$ from binary input sequences $\vec{u} \in \{0,1\}^k$ to codewords $\vec{c}$ from the codebook $\mathcal{C}$:
\begin{equation}
f_{\text{DM}} \colon  \{0,1\}^k \to \mathcal{C} \subseteq  \mathcal{A}^n
\end{equation}
where $\mathcal{A}$ is the output alphabet. The ratio $R = \frac{k}{n}$ is called the \textit{matching rate}. A higher matching rate for a given output distribution results in a higher transmission rate. We assume that the input sequence $\vec{U}$ is a random vector consisting of $k$ IID Bernoulli$(1/2)$ distributed bits. The output sequence of the DM is thus a random vector $\vec{\tilde{A}} = f_{\text{DM}}(\vec{U}) \sim \mathbb{U}_{\mathcal{C}}$ uniformly distributed on $ \mathcal{C}$. The goal of the DM is to make its output "look" as if it was a sequence of IID RVs, each distributed according to a target probability distribution $P_A$. This is usually performed by minimizing the normalized KL divergence between the DM's output $\vec{\tilde{A}}$ and the IID sequence $\vec{A} \sim P_A^n=\prod_{i=1}^n P_A:$
\begin{equation}
\frac{1}{n}\mathbb{D}(P_{\vec{\tilde{A}}} \| P_A^n) = \frac{1}{n} \sum_{\vec{c} \in \mathcal{C}}  \frac{1}{|\mathcal{C}|} \log_2 \frac{\frac{1}{|\mathcal{C}|}}{ P_A^n(\vec{c})}. \label{eq:def_div}
\end{equation} 
Different approaches for implementing low-complexity mappings $f_{\text{DM}}$ that achieve low normalized divergence can be found, e.g., in \cite{Schulte2015, Pikus2017, Pikus2019, Fehenberger2018, Gultekin2018a, Schulte2018, Steiner2018, Iscan2018a}.

\section{Arithmetic Coding Based DM} \label{s:ac_based_dm}
An AC-DM maps binary\footnote{Extensions to larger input alphabet sizes are straightforward.} input sequences $\vec{u}$ of length $k$ to non-binary sequences (codewords) $\vec{c}$ of length $n$. The sequences $\vec{c}$ are formed by symbols from the output alphabet $\mathcal{A} = \{a_1, \dots, a_m\}$, i.e., $\vec{c} \in \mathcal{A}^n$.

Each input sequence $\vec{u}_i, i\!=\!1,\dots,2^k,$ corresponds to a distinct point $d(\vec{u}_i), i\!=\!1,\dots,2^k,$ from the interval $[0,1)$. On the other hand, each codeword $\vec{c} \in \mathcal{A}^n$ corresponds to a distinct subinterval $I(\vec{c})$ (possibly of zero length) of the interval $[0,1)$. The subintervals $I(\vec{c})$ are chosen such that they \textit{partition} $[0,1)$, i.e., they are pairwise disjoint and $\bigcup_{\vec{c} \in \mathcal{A}^n} I(\vec{c}) = [0,1)$. At the encoder an input data sequence $\vec{u}$ is mapped to a codeword $\vec{c}$ if the point $d(\vec{u})$ lines inside the interval $I(\vec{c})$. At the decoder first an interval $I(\vec{c})$ is determined based on  the received codeword $\vec{c}$. Then, a point $d(\vec{u}) \in I(\vec{c})$ is found and decoded to the sequence $\vec{u}$.  
\begin{definition}[Natural $m$-ary code number]
	Consider the alphabet $\mathcal{A}=\{a_1, \dotsc, a_m\}$ and a sequence $\vec{x} \in \mathcal{A}^n$. The function ${\NC}_m(\cdot)$  returns a natural $m$-ary code number corresponding to the sequence $\vec{x}$, i.e., 
	\begin{equation}
	{\NC}_m(\vec{x}) = \sum_{j=1}^n \left(\id(x_j)-1\right) m^{n-j} 
	\end{equation}
	where the function $\id(\cdot)$ returns the alphabet index of the symbol, i.e., $\id(a_i) = i, \;\forall i$. 
\end{definition}

A binary input sequence $\vec{u} \in \{0,1\}^k$ is mapped to a point $d(\vec{u}) \in [0,1)$ via
\begin{equation}
d(\vec{u}) = \frac{{\NC}_2(\vec{u})}{2^k}. \label{eq:du}
\end{equation}

The codeword's intervals are ordered lexicographically in the $[0,1)$ interval, with the first codeword's symbol $c_1$ being the most significant symbol. We consider the lexicographical ordering of the output alphabet symbols $a_1<a_2<\dotsc<a_m$.  That is, for two codewords $\vec{c}_1 \in \mathcal{A}^n$ and $\vec{c}_2 \in \mathcal{A}^n$, if ${\NC}_m(\vec{c}_1)<{\NC}_m(\vec{c}_2)$, then $I(\vec{c}_1)$ will be placed somewhere below $I(\vec{c}_2)$. We describe $I(\vec{c})$ by the beginning $x(\vec{c})$ and the width $y(\vec{c})$, i.e., $I(\vec{c}) = [x(\vec{c}), x(\vec{c})+y(\vec{c})).$
An interval $I(\vec{c})$ can be computed recursively using a chosen probability model $P_{\vec{C}}$ on the codeword's symbols. $P_{\vec{C}}$ is usually specified in terms of the conditional probabilities (also referred to as branching probabilities) of the next symbol given the previous symbols, i.e., $P_{C_{i+1}|\vec{C}_1^i}(\cdot|\vec{s})$, where $\vec{s}$ is a sequence denoting a prefix of the codeword. The conditional cumulative probability of a letter $c \in \mathcal{A}$ is defined as
\begin{equation}
F_{C_{i+1}|\vec{C}_1^i}(c|\vec{s}) = \sum_{a \le c} P_{C_{i+1}|\vec{C}_1^i}(a|\vec{s}) \label{eq:cumm_model}
\end{equation}
where $a \le c$ refers to the lexicographical ordering of the alphabet's symbols. Clearly, we have $F_{C_{i+1}|\vec{C}_1^i}(a_m|\vec{s}) = 1$ for any $\vec{s}$. For notational convenience we also use  $F_{C_{i+1}|\vec{C}_1^i}(a_0|\vec{s}) = 0$ if $a_0 \notin \mathcal{A}$.  The computation of the codewords' intervals  can be performed by iteratively applying equations (\ref{eq:x_update}) and (\ref{eq:y_update}) below for $i=0,\dots,n-1:$
\begin{align}
&x(\lambda) =0, \; y(\lambda) =1 \label{eq:xy_init}\\
&x(\vec{s}a_j)\!=\!x(\vec{s})\!+\!y(\vec{s})F_{C_{i+1}|\vec{C}_1^i}(a_{j-1}|\vec{s}) \text{, } j=1,\dotsc,m \label{eq:x_update}\\
&y(\vec{s}a_j) = y(\vec{s})P_{C_{i+1}|\vec{C}_1^i}(a_j|\vec{s})  \text{, } j=1,\dotsc,m \label{eq:y_update}
\end{align}
where  $\lambda$ denotes an empty sequence, and $\vec{s}a_j$ denotes a concatenation of $\vec{s}$ and $a_j$. The recursive procedure (\ref{eq:xy_init})--(\ref{eq:y_update}) gives
\begin{align}
&x(\vec{c}) = \sum_{\vec{c}^\prime \in \mathcal{A}^n\colon \vec{c}^\prime < \vec{c} }  P_{\vec{C}}(\vec{c}^\prime)\\
&y(\vec{c}) = \prod_{i=0}^{n-1} P_{C_{i+1}|\vec{C}_1^i}(c_{i+1} | \vec{c}_1^i) =  P_{\vec{C}}(\vec{c})  \label{eq:inf_prec_y}
\end{align}
where $\vec{c}^\prime < \vec{c}$ refers to the lexicographical ordering of the codewords.

A one-to-one mapping between data sequences and codewords can be established if each point $d(\vec{u}_i),i=1,\dotsc,2^k,$ belongs to an interval and if each interval  $I(\vec{c}), \vec{c} \in \mathcal{A}^n,$ contains at most one point $d(\vec{u})$. The first condition follows because the intervals $I(\vec{c}), \vec{c} \in \mathcal{A}^n,$ partition the unit interval. The second condition can be guaranteed by letting the distance between two adjacent points be greater than the largest interval, i.e.,
\begin{equation}
\frac{1}{2^k} \ge \max_{\vec{c} \in \mathcal{A}^n} |I(\vec{c})| = \max_{\vec{c} \in \mathcal{A}^n} |y(\vec{c})| = \max_{\vec{c} \in \mathcal{A}^n} |P_{\vec{C}}(\vec{c})|. \label{eq:interval_ineq}
\end{equation}
We are interested in maximizing $k$ and thus we often choose 
\begin{equation}
k = k_{\text{IPA}} = \left\lfloor -\log_2\left(\max_{\vec{c} \in \mathcal{A}^n} |P_{\vec{C}}(\vec{c})|\right) \right\rfloor. \label{eq:inf_prec_k}
\end{equation}

\section{Finite-Precision Arithmetic Implementation} \label{s:finite_prec_impl}
The above described procedure requires IPA operations in general, which is infeasible in practice. Ramabadran in \cite{Ramabadran1990} describes how to implement a binary CCDM using FPA. We adapt this technique to implement a non-binary AC-DM with an arbitrary model $P_{\vec{C}}$. Instead of using the IPA models $P_{C_{i+1}|\vec{C}_1^i}$, $F_{C_{i+1}|\vec{C}_1^i}$, we use a finite-precision integer\footnote {Bounded integers can be represented using a finite number of bits.} representation $\hat{F}_{\vec{C}}$ for the cumulative model $F_{\vec{C}}$ from (\ref{eq:cumm_model}):
\begin{equation}
\hat{F}_{C_{i+1}|\vec{C}_1^i}(c|\vec{s}) = \left\lfloor \Theta F_{C_{i+1}|\vec{C}_1^i}(c|\vec{s}) + \frac{1}{2}  \right\rfloor \label{eq:fp_model_F}
\end{equation}
with $\hat{F}_{C_{i+1}|\vec{C}_1^i}(a_0|\vec{s})=0$, since $a_0 \notin \mathcal{A}$.  The model $\hat{P}_{\vec{C}}$ is defined as
\begin{equation}
\hat{P}_{C_{i+1}|\vec{C}_1^i}(a_j|\vec{s}) = \hat{F}_{C_{i+1}|\vec{C}_1^i}(a_j|\vec{s}) - \hat{F}_{C_{i+1}|\vec{C}_1^i}(a_{j-1}|\vec{s}). \label{eq:fp_model_P} 
\end{equation}
$\Theta$ is a scaling factor used for the integer representation. It effectively converts the probability models $P_{\vec{C}}, F_{\vec{C}}$ into frequency-counts models $\hat{P}_{\vec{C}}, \hat{F}_{\vec{C}}$ per $\Theta$ symbols. Next, we represent the subsequent intervals appearing in (\ref{eq:xy_init})--(\ref{eq:y_update}) by three integer numbers $\hat{x}(\vec{s}), \hat{y}(\vec{s})$, and $L(\vec{s})$. The start $x(\vec{s})$ and width $y(\vec{s})$ of the interval will be represented as binary fractions with $L(\vec{s})+w$ bits ($w$ is a fixed parameter that we choose as described in (\ref{eq:2w_theta}) below)
\begin{equation}
x(\vec{s}) = \frac{\hat{x}(\vec{s})}{2^{L(\vec{s})+w}},\;\;\;\; y(\vec{s}) = \frac{\hat{y}(\vec{s})}{2^{L(\vec{s})+w}}. \label{eq:fp_y_yhat}
\end{equation}
The recursive formulas for computing the values  $\hat{x}(\vec{s}), \hat{y}(\vec{s})$, and $L(\vec{s})$ are
\begin{align}
&\hat{x}(\lambda)=0, \;\hat{y}(\lambda)=2^w, \;L(\lambda)=0 \label{eq:fp_xy_init}\\
&\hat{x}(\vec{s}a_j) \!=\! \left(\!\hat{x}(\vec{s})\!+\!\left\lfloor\!\frac{\hat{y}(\vec{s}) \hat{F}_{C_{i+1}|\vec{C}_1^i}(a_{j-1}|\vec{s})}{\Theta}\!+\!\frac{1}{2}\!\right\rfloor \right)2^v \label{eq:fp_x_update} \\
&\hat{y}(\vec{s}a_j) = \left( \left\lfloor \frac{\hat{y}(\vec{s}) \hat{F}_{C_{i+1}|\vec{C}_1^i}(a_j|\vec{s})}{\Theta} + \frac{1}{2} \right\rfloor \right. + \nonumber \\
&\hspace{1.8cm}- \left.\left\lfloor \frac{\hat{y}(\vec{s}) \hat{F}_{C_{i+1}|\vec{C}_1^i}(a_{j-1}|\vec{s})}{\Theta} + \frac{1}{2} \right\rfloor \right)2^v \label{eq:fp_y_update} \\
&L(\vec{s}a_j) = L(\vec{s}) + v \label{eq:fp_L_update}
\end{align}
where $v$ is chosen such that\footnote{If $\hat{P}_{C_{i+1}|\vec{C}_1^i}(a_j|\vec{s}) = 0$ for certain $\vec{s},a_j$, then we may have $\hat{y}(\vec{s}a_j)=0$ and a $v$ satisfying (\ref{eq:fp_v_update}) can not be found. This does not lead to problems as the encoder will not produce codewords with the prefix $\vec{s}a_j$ (the codeword's interval has zero length), so further subdivisions are not needed.}
\begin{equation}
2^{w} \le \hat{y}(\vec{s}a_j) < 2^{w+1}. \label{eq:fp_v_update}
\end{equation}
The parameter $w+1$ represents the number of bits used to represent the mantissa $\hat{y}(\vec{s}a_j)$ of the current interval width $y(\vec{s}a_j)$. The scaling by $2^v$ in (\ref{eq:fp_x_update}) and (\ref{eq:fp_y_update}) guarantees that the mantissa $\hat{y}(\vec{s}a_j)$ is at least $2^{w}$. This provides sufficient precision for further subdivisions in (\ref{eq:fp_x_update}) and (\ref{eq:fp_y_update}) for the next symbols. Note that (\ref{eq:fp_xy_init})--(\ref{eq:fp_y_update}) round the interval boundaries rather than the width. The interval width (\ref{eq:fp_y_update}) is simply a difference between the two boundaries. This ensures that the original interval is partitioned during each step. Thus, the codeword intervals $I(\vec{c}), \vec{c} \in \mathcal{A}^n$, partition the starting unit interval. We want to avoid having the intervals disappear due to the rounding operations, i.e., we require
\begin{equation*}
\hat{P}_{C_{i+1}|\vec{C}_1^i}(a_j|\vec{s}) > 0 \implies \hat{y}(\vec{s}a_j)>0
\end{equation*} 
which will be the case if $\frac{\hat{y}(\vec{s}a_j)}{\Theta} \ge 1$. This in turn can be guaranteed by choosing 
\begin{equation}
2^w \ge \Theta. \label{eq:2w_theta}
\end{equation}
As in the IPA case (\ref{eq:interval_ineq}), we guarantee error-free decoding by choosing
\begin{equation}
y(\vec{c}) = \frac{\hat{y}(\vec{c})}{2^{L(\vec{c})+w}} \le \frac{1}{2^k}, \;\forall \vec{c} \in \mathcal{A}^n. \label{eq:fp_interval_ineq}
\end{equation}

\section{Bounding the Discrepancy} \label{s:bounding_the_d}
The above described FPA scheme implements a one-to-one mapping if (\ref{eq:fp_interval_ineq}) is satisfied and the codeword intervals partition the unit interval. The latter condition is inherently satisfied by rounding the interval boundaries in (\ref{eq:fp_x_update})--(\ref{eq:fp_y_update}) rather than rounding the interval length. The FPA condition (\ref{eq:fp_interval_ineq}) does not follow from the IPA condition (\ref{eq:interval_ineq}). This is because the intervals computed by the FPA implementation are approximations of the IPA intervals. The discrepancy is due to the model rounding (\ref{eq:fp_model_F})--(\ref{eq:fp_model_P}) and the rounding operations during the computation of (\ref{eq:fp_xy_init})--(\ref{eq:fp_y_update}). To assess the FPA condition (\ref{eq:fp_interval_ineq}) we must bound the rounding error. From (\ref{eq:fp_y_update}) we have
\begin{align}
\frac{\hat{y}(\vec{s}a_j)}{2^v} &=  \left\lfloor \frac{\hat{y}(\vec{s}) \hat{F}_{C_{i+1}|\vec{C}_1^i}(a_j|\vec{s})}{\Theta} + \frac{1}{2} \right\rfloor \nonumber\\
&\hspace{1.5cm}- \left\lfloor \frac{\hat{y}(\vec{s}) \hat{F}_{C_{i+1}|\vec{C}_1^i}(a_{j-1}|\vec{s})}{\Theta} + \frac{1}{2} \right\rfloor \\
&\le \frac{\hat{y}(\vec{s}) \hat{P}_{C_{i+1}|\vec{C}_1^i}(a_{j}|\vec{s}) }{\Theta} + 1 \label{eq:fp_y_bound}
\end{align}
where the second line follows by $x-1 < \lfloor x \rfloor \le x$ and (\ref{eq:fp_model_P}). Dividing both sides by $2^{L(\vec{s})+w}$ and using $(\ref{eq:fp_y_yhat})$ we get
\begin{align}
y(\vec{s}a_j) &\le y(\vec{s}) \frac{\hat{P}_{C_{i+1}|\vec{C}_1^i}(a_{j}|\vec{s})}{\Theta} + 2^{-L(\vec{s})-w} \\
&\le y(\vec{s}) \left( P_{C_{i+1}|\vec{C}_1^i}(a_{j}|\vec{s}) + \epsilon \right)+ 2^{-L(\vec{s})-w} \\
&\le  y(\vec{s}) P_{C_{i+1}|\vec{C}_1^i}(a_{j}|\vec{s})   \left(1 + \frac{\epsilon + 2^{-w}}{P_{C_{i+1}|\vec{C}_1^i}(a_{j}|\vec{s})} \right) \label{eq:fp_single_interval_mismatch}
\end{align}
where in the second line we introduced $\epsilon$ to denote the maximal absolute error between the IPA model $P_{C_{i+1}|\vec{C}_1^i}$ and the rounded model $\frac{1}{\Theta}\hat{P}_{C_{i+1}|\vec{C}_1^i}$, e.g., $\epsilon = \frac{1}{2} \Theta^{-1}$ if rounding is used. The third line follows because $y(\vec{s}) \ge 2^{-L(\vec{s})}$. Finally, the length of the interval can be bounded by applying (\ref{eq:fp_single_interval_mismatch})  for all codeword symbols consecutively
\begin{equation}
y(\vec{c}) \le P_{\vec{C}}(\vec{c}) \prod_{i=0}^{n-1} \left( 1 + \frac{\epsilon + 2^{-w}}{P_{C_{i+1}|\vec{C}_1^i}(c_{i+1}|\vec{c}_1^i)} \right) \label{eq:fp_final_interval_mismatch}
\end{equation}
which results in the bound (\ref{eq:fp_interval_ineq}) on the input length becoming 
\begin{equation}
k_{\text{FPA}} \!\le\! -\log_2 \!\left( \max_{c \in \mathcal{A}^n} P_{\vec{C}}(\vec{c}) \prod_{i=0}^{n-1} \left(\! 1 + \frac{\epsilon + 2^{-w}}{P_{C_{i+1}|\vec{C}_1^i}(c_{i+1}|\vec{c}_1^i)} \right)\!\right).  \label{eq:fp_interval_ineq2}
\end{equation}
Inequality (\ref{eq:fp_interval_ineq2}) connects the length $k_{\text{FPA}}$ obtained for the FPA implementation and the length $k_{\text{IPA}}$ for the IPA implementation from (\ref{eq:inf_prec_k}). Comparing (\ref{eq:fp_final_interval_mismatch}) and (\ref{eq:inf_prec_y}) we observe that the base intervals can dilate due to the model approximation and rounding operations. This will effectively reduce the input  length. Using higher precision arithmetic, i.e., larger $w$ and $\Theta$, results in a lower dilatation. 

We emphasize the importance of (\ref{eq:fp_interval_ineq2}), since checking directly\footnote{For example by encoding and decoding all possible input sequences.} if an AC-DM is one-to-one is not feasible for long sequences. An alternative approach may involve evaluating the right-hand-side of (\ref{eq:fp_interval_ineq2}) for some randomly selected codewords from the codebook of the AC-DM and selecting the lowest obtained $k_{\text{FPA}}$. This approach guarantees error free decoding with high probability. Interestingly, $k_{\text{FPA}}$ for the CCDM has a closed form expression. See next section.

We can get a more-restrictive upper bound on $k_{\text{FPA}}$ by decomposing the maximization, i.e., 
\begin{align*}
k_{\text{FPA}} &\le -\log_2 \!\left( \max_{\vec{c} \in {\text{supp}(P_{\vec{C}})}} P_{\vec{C}}(\vec{c}) \right) + \\
&- \underbrace{\max_{\vec{c} \in {\text{supp}(P_{\vec{C}})}} \;\; \sum_{i=0}^{n-1} \log_2\left(  1 + \frac{\epsilon + 2^{-w}}{P_{C_{i+1}|\vec{C}_1^i}(c_{i+1}|\vec{c}_1^i)} \right)}_{\Delta k}
\end{align*}
where $\text{supp}(P_{\vec{C}}) = \{\vec{c} \in \mathcal{A}^n \colon P_{\vec{C}}(\vec{c})>0 \}$. 
The first term is the upper bound on $k_{\text{IPA}}$ from (\ref{eq:inf_prec_k}). The latter term is the input length loss (also called rate-loss) $\Delta k$  due to the FPA implementation of an AC-DM ($\Delta k \!=\!0 \implies k_{\text{FPA}}\!=\!k_{\text{IPA}}$). By using the identity $\log_2(1+x) \le x \log_2 e$, we get 
$$\Delta k \le (\epsilon + 2^{-w})  \left( \max_{\vec{c} \in {\text{supp}(P_{\vec{C}})}} \sum_{i=0}^{n-1} \frac{\log_2 e}{P_{C_{i+1}|\vec{C}_1^i}(c_{i+1}|\vec{c}_1^i)} \right)$$ 
The rate-loss $\Delta k$ shrinks exponentially\footnote{If $\epsilon$ shrinks exponentially which is the case when rounding is used.} with $w$, which allows to keep the rate loss small with reasonable precision.

\section{Efficient Implementation of the CCDM} \label{s:eff_ccdm}
We now turn to designing an FPA CCDM.
\begin{definition}[Composition and $n$-type]
	A \textit{composition} of a vector $\vec{c} \in \mathcal{A}^n$ is a vector containing the numbers of occurrences in $\vec{c}$ of each of the symbols from the alphabet $\mathcal{A}$. We denote a composition by 
	\begin{equation}
	\gamma(\vec{c}) \defeq [n_{a_1}(\vec{c}), \dotsc,  n_{a_m}(\vec{c})]
	\end{equation}
	where $n_{a}(\vec{c}) = |\{i \colon c_i = a\}|$ denotes the number of occurrences of the symbol $a$ in the sequence $\vec{c}$. An $n$-type $Q_A$ is a probability distribution corresponding to the composition $\gamma(\vec{c})$
	\begin{equation}
	Q_A(a) = \frac{n_{a}(\vec{c})}{n}, \;\; a \in \mathcal{A}.
	\end{equation}		
\end{definition}
\begin{example} 
$\mathcal{A} \!=\! \{0, 1\}, \vec{c} = [1011], \vec{\gamma}(\vec{c}) = [1,3]$ and $Q_A(0)=0.25, Q_A(0)=0.75.$
\end{example}

The CCDM chooses some composition $\vec{\gamma}=[n_{a_1}, \dotsc, n_{a_m}]$ and the following model for AC
\begin{equation}
P_{\vec{C}}(\vec{c}) =  \begin{cases}
\frac{1}{ | \mathcal{T}_{\vec{\gamma}} | },& \text{if } \vec{c} \in \mathcal{T}_{\vec{\gamma}}\\ \label{eq:pc_model_ccdm}
0,              & \text{otherwise}
\end{cases}							
\end{equation}
where $\mathcal{T}_{\vec{\gamma}}$ is a set of length-$n$ sequences with the composition $\vec{\gamma}$, i.e., $\mathcal{T}_{\vec{\gamma}} = \{\vec{c} \in \mathcal{A}^n \colon \gamma(\vec{c}) = \vec{\gamma} \}$. For the IPA implementation the input length would be $k_{\text{IPA}} = \lfloor \log_2 |\mathcal{T}_{\vec{\gamma}}| \rfloor$. The CCDM's conditional model is
\begin{equation}
P_{C_{i+1}|\vec{C}_1^i}(a_j|\vec{s}) = \frac{n_{a_j} - n_{a_j}(\vec{s})}{n-i}, \text{ for } a_j \in \mathcal{A}. \label{eq:ccdm_p_s}
\end{equation}

The CCDM is an AC-DM with the model (\ref{eq:pc_model_ccdm}) and can be implemented using the FPA implementation presented above. The CCDM's conditional probabilities (\ref{eq:ccdm_p_s}) can be represented exactly\footnote{From (\ref{eq:fp_model_P}), (\ref{eq:fp_model_F}), and  (\ref{eq:cumm_model}), the rounded model $\frac{1}{\Theta}\hat{P}_{C_{i+1}|\vec{C}_1^i}$ with $\Theta=n-i$ is equal to the IPA model $P_{C_{i+1}|\vec{C}_1^i}$.} by choosing  $\Theta=n-i$ in (\ref{eq:fp_model_F}), i.e., $\Theta$ is decremented after each encoded symbol. This way the rounding is avoided and $k_{\text{FPA}}$ from (\ref{eq:fp_interval_ineq2}) can be bounded by
\begin{align}
&k_{\text{FPA}} \!\le\!  \log_2 |\mathcal{T}_{\vec{\gamma}}| - \max_{\vec{c} \in {\mathcal{T}_{\vec{\gamma}}}} \;\; \sum_{i=0}^{n-1} \log_2\left(  1 + \frac{2^{-w}}{P_{C_{i+1}|\vec{C}_1^i}(c_{i+1}|\vec{c}_1^i)} \right)\nonumber\\ 
&= \log_2 |\mathcal{T}_{\vec{\gamma}}|  - \Delta k \stackrel{\text{choose}}{\implies} k_{\text{FPA}} = \lfloor \log_2 |\mathcal{T}_{\vec{\gamma}}| - \Delta k \rfloor \label{eq:ccdm_k_ub} 
\end{align}
where $\vec{\gamma}$ is the composition used by the CCDM. $\Delta k$ can be found analytically for the CCDM, see Theorem \ref{t:ccdm}. This theorem simplifies the choice of $k_{\text{FPA}}$ for an FPA CCDM and guarantees that the CCDM is one-to-one. We note that Theorem \ref{t:ccdm} applies to any composition, i.e., the alphabet's symbols can be relabeled such that the theorem's prerequisites are satisfied. 

\begin{theorem}[The longest interval for FPA CCDM]\label{t:ccdm}
	Consider an FPA CCDM using the composition $\vec{\gamma} = [n_{a_1}, \dotsc, n_{a_m}]$ with $n_{a_1} \le n_{a_2} \le \dotsc \le n_{a_m}$. A sequence 
	\begin{equation}
	\vec{z} = [\underbrace{a_1 \dotsc a_1}_{n_{a_1}} \underbrace{a_2 \dotsc a_2}_{n_{a_2}} \dotsc \underbrace{a_m \dotsc a_m}_{n_{a_m}}], \label{eq:ccdm_worst_case_seq}
	\end{equation} 
	has the largest upper bound (\ref{eq:fp_final_interval_mismatch}) on the interval length among all sequences in $\mathcal{T}_{\vec{\gamma}}$, or equivalently the interval $I(\vec{z})$ can be the longest after dilution due to the rounding operations. Thus, the sequence $\vec{z}$ determines the FPA rate-loss
	\begin{equation}
	\Delta k =  \sum_{i=0}^{n-1} \log_2\left(  1 + \frac{2^{-w}}{P_{C_{i+1}|\vec{C}_1^i}(z_{i+1}|\vec{z}_1^i)} \right). \label{eq:thm_rateloss}
	\end{equation} 	
\end{theorem}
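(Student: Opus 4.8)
The plan is to reduce the statement to a discrete optimization and solve that by an exchange argument run inside an induction on $n$. Since the CCDM model (\ref{eq:ccdm_p_s}) is represented exactly (so $\epsilon=0$ for this model) and $P_{\vec C}$ is uniform on $\mathcal T_{\vec\gamma}$, the bound (\ref{eq:fp_final_interval_mismatch}) equals $|\mathcal T_{\vec\gamma}|^{-1}\Phi(\vec c)$ for every $\vec c\in\mathcal T_{\vec\gamma}$, where
\begin{align*}
\Phi(\vec c) &\defeq \prod_{i=0}^{n-1}\Big(1+\frac{2^{-w}}{P_{C_{i+1}|\vec C_1^i}(c_{i+1}|\vec c_1^i)}\Big)\\
&= \prod_{i=0}^{n-1}\Big(1+\frac{2^{-w}(n-i)}{n_{c_{i+1}}-n_{c_{i+1}}(\vec c_1^i)}\Big).
\end{align*}
Thus the theorem amounts to showing that $\vec z$ maximizes $\Phi$ over $\mathcal T_{\vec\gamma}$; this also yields (\ref{eq:thm_rateloss}), since by (\ref{eq:ccdm_k_ub}) we have $\Delta k=\log_2\max_{\vec c\in\mathcal T_{\vec\gamma}}\Phi(\vec c)$.

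The first ingredient is an exchange lemma. Swapping the symbols at two adjacent positions $i,i+1$ of a sequence --- say $a$ at position $i$ and $b\ne a$ at position $i+1$ --- changes \emph{only} the two factors of $\Phi$ at those positions, because every other factor depends only on symbol counts in prefixes or suffixes that are left unchanged by the swap. Writing $\alpha$ (resp.\ $\beta$) for the number of $a$'s (resp.\ $b$'s) among positions $i,\dots,n$ of the current sequence, a one-line expansion of the product of the two affected factors shows that the new value of $\Phi$ minus the old one equals a strictly positive quantity times $2^{-w}\big(1/\beta-1/\alpha\big)$. Hence such a swap does not decrease $\Phi$ as soon as $\beta\le\alpha$, i.e.\ whenever the symbol pushed forward occurs in the remaining suffix no more often than the symbol pushed back.

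The second ingredient is the telescoping identity $\Phi(\vec c)=\big(1+2^{-w}n/n_{c_1}\big)\,\Phi'(\vec c_2^n)$, in which $\Phi'$ is the very same functional evaluated on the length-$(n-1)$ suffix $\vec c_2^n$ (whose composition is $\vec\gamma-\vec e_{j_1}$ with $a_{j_1}=c_1$); here one checks that the length offset $n-i$ re-indexes correctly to $(n-1)-i$. I would then induct on $n$. Assume $\vec z$ maximizes $\Phi$ within every length-$(n-1)$ composition class. Let $\vec c^\star$ maximize $\Phi$ on $\mathcal T_{\vec\gamma}$ and put $a_j=c^\star_1$; by the telescoping identity and the induction hypothesis we may assume that $\vec c^\star$ is a leading $a_j$ followed by the sorted arrangement of $\vec\gamma-\vec e_j$. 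Now slide this leading $a_j$ rightwards, one adjacent transposition at a time, through the (possibly empty) blocks of $a_1,\dots,a_{j-1}$ until it merges with the block of $a_j$'s --- the resulting sequence is $\vec z$ (up to permuting equal-count blocks, which does not change $\Phi$; see below). At each transposition the symbol pushed forward is some $a_{j'}$ with $j'<j$, so it occurs in the current suffix at most $n_{a_{j'}}\le n_{a_j}$ times, while $a_j$ occurs there exactly $n_{a_j}$ times (throughout the slide the remaining copies of $a_j$ all lie strictly to the right of the copy being moved). By the exchange lemma each step leaves $\Phi$ unchanged or increases it, so $\Phi(\vec z)\ge\Phi(\vec c^\star)$ and $\vec z$ is a maximizer, closing the induction.

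I expect the main obstacle to be precisely the step that fixes the first symbol: the factor $1+2^{-w}n/n_{c_1}$ rewards making the rarest symbol first, whereas the suffix value $\Phi'$ behaves like a Schur-concave function of the suffix composition and rewards the opposite choice, so optimizing the two factors separately is inconclusive --- the slide argument is what breaks this tie. What remains is routine bookkeeping: ties among the $n_{a_j}$ make the sorted arrangement well-defined only up to permuting equal-count blocks, but $\Phi$ is invariant under this, as is transparent from the symbol-wise rewriting $\Phi(\vec c)=\prod_{j}\prod_{r=1}^{n_{a_j}}\big(1+2^{-w}\pi^{(j)}_r/r\big)$ with $\pi^{(j)}_1\le\cdots\le\pi^{(j)}_{n_{a_j}}$ the sorted values of $n-p+1$ over the positions $p$ of $a_j$ in $\vec c$; and conditional probabilities equal to $0$ cause no issue, since the corresponding codeword prefixes carry zero-length intervals and are excluded, cf.\ the footnote to (\ref{eq:fp_v_update}).
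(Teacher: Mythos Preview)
Your proof is correct. Both your argument and the paper's are exchange/rearrangement arguments, but the decomposition differs. The paper first isolates the \emph{binary} case (its Lemma~2): for sequences over $\{0,1\}$ it proves by induction on the length that $[0\cdots0\,1\cdots1]$ maximizes the cost, and the inductive comparison of the two candidate maximizers relies on the subadditivity properties of $x\mapsto\log_2(1+\delta x)$ (its Lemma~1). It then lifts to the $m$-ary case (its Lemma~3) by contradiction: given a putative non-greedy maximizer, one restricts to the positions carrying two chosen symbols and invokes the binary lemma to exhibit a strict improvement. Your route is more direct: you establish an adjacent-transposition inequality by a two-line expansion of the two affected factors (no log-subadditivity needed---indeed your swap lemma holds for every $\delta>0$, whereas the paper's Lemma~2 uses $\delta\le 1/x_1$), and then run a single induction on $n$ over all alphabet sizes simultaneously, bubbling the leading symbol into its block. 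What you gain is elementarity and a slightly stronger swap lemma; what the paper's decomposition buys is a clean separation of the analytic content (the binary cost comparison) from the combinatorial reduction (greedy optimality), at the price of the auxiliary Lemma~1. The bookkeeping you flag around equal-count blocks is real but harmless, exactly as you note: any symbol sitting before the $a_j$-block in the sorted suffix has count at most $n_{a_j}-1$ in $\vec\gamma$, so the slide is monotone, and the resulting block sequence agrees with $\vec z$ up to permuting blocks of equal size, under which $\Phi$ is invariant.
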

\begin{proof}
	See Appendix.
\end{proof}

In \cite{Schulte2015} the authors consider a one-to-one IPA CCDM that is asymptotically optimal.\footnote{In short, the optimality means that for a target distribution $P_A$ and properly chosen composition, we have $\frac{k}{n} \to \mathbb{H}(P_A)$ and $\frac{1}{n} \mathbb{D}(P_{\vec{\tilde{A}}} \| P_A^n) \to 0$ as $n \to \infty$, see \cite{Schulte2015} for more detail.}
It turns out that the FPA rate-loss $\Delta k$ prevents the asymptotic optimality of a one-to-one FPA CCDM implemented as above. See Corollary 1.

\begin{collorary}[One-to-one, FPA CCDM is not asymptotically optimal] \label{c:ccdm_is_not_optimal}
	Consider an FPA CCDM with the precision parameter $w$.
	\begin{enumerate}
		\item Suppose CCDM uses the $n$-type $Q_A$. The matching rate $R = \frac{k_{\text{FPA}}}{n}$ of the CCDM satisfies 
		\begin{equation}
		\lim_{n \to \infty} R < \mathbb{H}(Q_A) -\log_2 \left(1 + 2^{-w}\right). \label{eq:coll_bound1}
		\end{equation}
		\item Consider an arbitrary target distribution $P_A$ on the output alphabet $\mathcal{A}$, and a CCDM that chooses an arbitrary $n$-type $Q_A$. Then we have
		\begin{equation}
		\lim_{n \to \infty} \frac{1}{n} \mathbb{D}(P_{\vec{\tilde{A}}} \| P_A^n) > \log_2 \left(1 + 2^{-w}\right). \label{eq:coll_bound2}
		\end{equation}	
	\end{enumerate}
\end{collorary}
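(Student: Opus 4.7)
The plan is to derive both parts from the explicit expression $k_{\text{FPA}} = \lfloor \log_2|\mathcal{T}_{\vec{\gamma}}| - \Delta k \rfloor$ in (\ref{eq:ccdm_k_ub}) together with the closed-form rate loss (\ref{eq:thm_rateloss}) provided by Theorem~\ref{t:ccdm}.

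For Part~1, I would first apply the crude bound $P_{C_{i+1}|\vec{C}_1^i}(z_{i+1}|\vec{z}_1^i)\le 1$ to each summand of (\ref{eq:thm_rateloss}), which gives $\Delta k \ge n\log_2(1+2^{-w})$. Combining this with the type-counting inequality $\log_2|\mathcal{T}_{\vec{\gamma}}|\le n\mathbb{H}(Q_A)$ yields the non-strict version $R \le \mathbb{H}(Q_A) - \log_2(1+2^{-w})$. To upgrade this to the strict inequality (\ref{eq:coll_bound1}), I would exploit the structure of the worst-case sequence $\vec{z}$ from Theorem~\ref{t:ccdm}: within the block of a symbol $a_j$ with $j<m$, the conditional probability $P_{C_{i+1}|\vec{C}_1^i}(a_j|\vec{z}_1^i)$ is monotonically decreasing in the intra-block position and is bounded above by $n_{a_j}/(n-\sum_{l<j}n_{a_l})$, whose limit stays strictly below one. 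Thus, provided $Q_A$ places mass on at least two symbols, the $\Theta(n)$ summands from the non-terminal blocks each exceed $\log_2(1+2^{-w})$ by a positive constant, giving $\liminf_n \Delta k/n > \log_2(1+2^{-w})$ and hence (\ref{eq:coll_bound1}).

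For Part~2, the key simplification is that every CCDM codeword has composition $\vec{\gamma}$, so $P_A^n(\vec{c})=\prod_{a\in\mathcal{A}}P_A(a)^{n_a}$ is constant on $\mathcal{C}$. Substituting this and $|\mathcal{C}|=2^{k_{\text{FPA}}}$ into (\ref{eq:def_div}) and using the standard decomposition $-\sum_{a}Q_A(a)\log_2 P_A(a)=\mathbb{H}(Q_A)+\mathbb{D}(Q_A\|P_A)$ leads to the one-line identity
\begin{equation*}
\frac{1}{n}\mathbb{D}(P_{\vec{\tilde{A}}}\|P_A^n) = \mathbb{H}(Q_A) - R + \mathbb{D}(Q_A\|P_A).
\end{equation*}
Inserting the bound from Part~1 and invoking $\mathbb{D}(Q_A\|P_A)\ge 0$ immediately yields (\ref{eq:coll_bound2}).

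The main obstacle is extracting the strict inequality in Part~1. The naive bound $\Delta k \ge n\log_2(1+2^{-w})$ is in fact tight on the terminal $a_m$-block, where every conditional probability equals one, so on its own it delivers only $\le$ asymptotically. Isolating a bounded-away gap therefore requires the monotonicity-and-uniform-upper-bound argument on the conditional probabilities within the non-terminal blocks sketched above, together with the (implicit) non-degeneracy of $Q_A$; once that gap is in hand, the rest of both parts is routine substitution.
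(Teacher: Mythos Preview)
Your proposal is correct, and Part~2 is line-for-line the paper's argument: the constant-composition structure forces $P_A^n(\vec{c})$ to be constant on the codebook, so (\ref{eq:def_div}) collapses to $\mathbb{H}(Q_A)-R+\mathbb{D}(Q_A\|P_A)$, and the bound from Part~1 finishes.

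For Part~1 the overall scaffolding is the same---start from (\ref{eq:ccdm_k_ub}), use the type-counting bound and the trivial bound $P_{C_{i+1}|\vec{C}_1^i}\le 1$---but you and the paper differ in \emph{where} the strict inequality comes from. The paper simply invokes $\frac{1}{n}\log_2|\mathcal{T}_{\vec{\gamma}}|<\mathbb{H}(Q_A)$ (strict for every finite $n$) together with the non-strict $\Delta k\ge n\log_2(1+2^{-w})$, and writes down (\ref{eq:coll_bound1}) directly. You instead keep the type-counting bound non-strict and push the strictness onto $\Delta k/n$ by exploiting the non-terminal blocks of the worst-case sequence $\vec{z}$ from Theorem~\ref{t:ccdm}, where the branching probabilities are bounded away from one uniformly in $n$. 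Your route is longer, but it is actually the more honest one for the \emph{limit} statement: the paper's per-$n$ strict inequality, once passed through $n\to\infty$, formally yields only $\le$, since the gap $\mathbb{H}(Q_A)-\frac{1}{n}\log_2|\mathcal{T}_{\vec{\gamma}}|$ closes. Your block-wise analysis of $\vec{z}$ supplies the missing uniform slack (under the implicit non-degeneracy assumption that $Q_A$ is not a point mass) and therefore genuinely delivers the strict inequality in the limit.
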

\begin{proof}
	From (\ref{eq:ccdm_k_ub}) we have
	\begin{align}
	R = \frac{k_{\text{FPA}}}{n} &\le  \frac{1}{n} \log_2 |\mathcal{T}_{\vec{\gamma}}|  - \frac{\Delta k}{n}\\
	&< \mathbb{H}(Q_A) - \log_2 \left(1 + 2^{-w}\right) \label{eq:tmp1}
	\end{align}
	where $\vec{\gamma}$ is the corresponding composition. The inequality  $\frac{1}{n} \log_2 |\mathcal{T}_{\vec{\gamma}}|<\mathbb{H}(Q_A)$ follows by the optimality of the IPA CCDM. Equation (\ref{eq:tmp1}) follows by (\ref{eq:ccdm_k_ub}) with the branching probabilities bounded by one.
	
	Next, the divergence for the CCDM with the $n$-type $Q_A$ is
	\begin{align*}
	\frac{1}{n}\mathbb{D}(P_{\vec{\tilde{A}}} \| P_A^n) &= \mathbb{H}(Q_A) - \frac{k_{\text{FPA}}}{n} + \mathbb{D}(Q_A||P_A)\\
	&> \log_2 \left(1 + 2^{-w}\right)\!+\!\mathbb{D}(Q_A||P_A) 
	\end{align*}
	where the last step follows by (\ref{eq:tmp1}). 
\end{proof}
We remark that the bounds (\ref{eq:coll_bound1}) and (\ref{eq:coll_bound2}) are not tight and suffice only to show that the FPA CCDM is not asymptotically optimal. Tighter bounds can be obtained by evaluating $\Delta k$ using Theorem \ref{t:ccdm}. 

\section{Results} \label{s:results}
\begin{figure}
	\includegraphics{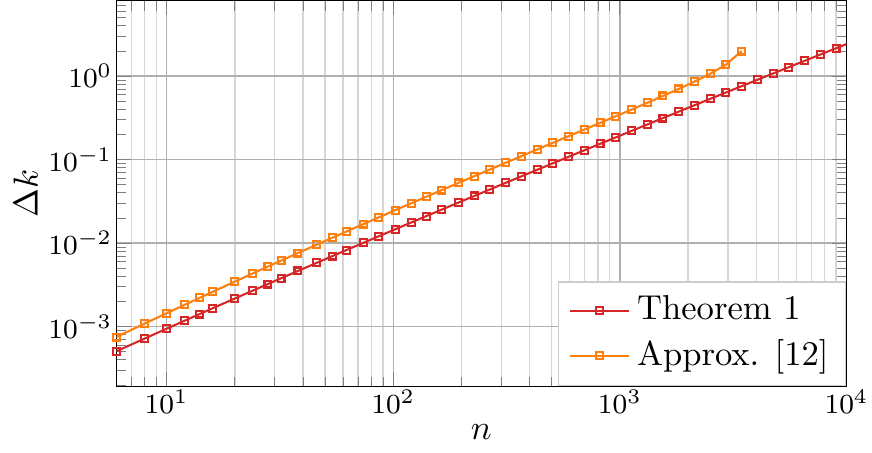}
	\caption{$\Delta k$ for binary CCDM with composition $\vec{\gamma} = [\frac{n}{2}, \frac{n}{2}]$ and $w=14$ obtained by Theorem \ref{t:ccdm} and the bound from \cite{Ramabadran1990}.}
	\label{f:ccdm_ex1}
\end{figure}
\begin{figure*}
	\centering
	\begin{subfigure}[t]{0.47\textwidth}%
		\centering%
		\includegraphics{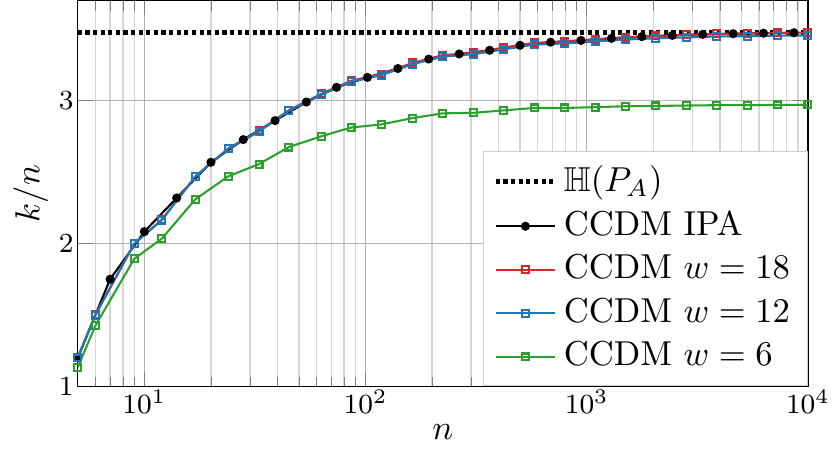}%
		\caption{Matching rate.}%
		\label{f:ccdm_1a}					
	\end{subfigure}%
	~ 
	\begin{subfigure}[t]{0.48\textwidth}
		\centering
		\includegraphics{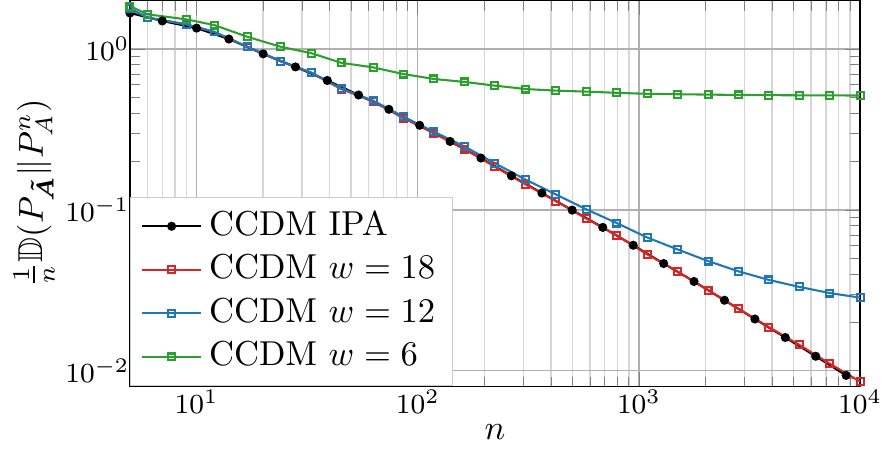}
		\caption{Normalized divergence.}
		\label{f:ccdm_1b}
	\end{subfigure}
	\caption{Matching rate and normalized divergence for CCDM with the target distribution $P_A(a) \propto e^{-0.004 a^2}$ for $a \in \mathcal{A} = \{1, 3, \dotsc, 31\}$, and precision parameter $w \in \{6, 12, 18\}$.}
	\label{f:ccdm_1}	
\end{figure*}

\subsection{Optimal $m$-out-of-$n$ Codes}
The paper \cite{Ramabadran1990} provides an upper bound on the rate-loss $\Delta k$ for a binary CCDM. For the composition $\vec{\gamma} = [n_0, n_1]$, the bound from \cite{Ramabadran1990} is
$$ \Delta k < \log_2\left( 1 + \log_e \frac{1}{1 - 2^{-(w+1)} T }\right) \text{  for  } 2^{-(w+1)} T<1$$
with
$$ T = \sum_{b \in \{0,1\}} n_b \left(1.5772 + \log_e n_{1-b} + \frac{1}{2n_{1-b}} \right).$$

The optimal codes $k = \lfloor \log_2 \binom{n}{m} \rfloor$, can be constructed by the FPA CCDM only when $\Delta k < 1$. Let $n_{\text{MAX}}$ denote the maximum output sequence length for which $\Delta k < 1$. For example, \cite{Ramabadran1990} approximates $n_{\text{MAX}} \approx 2390$ for $m=\frac{n}{2}$ and $w=14$. By using Theorem \ref{t:ccdm} we obtain a tighter bound, i.e., $n_{\text{MAX}} \approx 4440$, see Fig. \ref{f:ccdm_ex1}. To verify the result, we built a CCDM with $\vec{\gamma} = [1600, 1600]$ and $ k=3193$, and successfully verified the encoding and decoding for $10^{10}$ different input sequences. 

\subsection{Low Precision CCDM}
We use the algorithm from Sec. \ref{s:finite_prec_impl} to implement non-binary CCDMs with precision parameters $w \in \{6, 12,18\}$. The target distribution is $P_A(a) \propto e^{-0.004 a^2}$ for $a \in \mathcal{A} = \{1, 3, \dotsc, 31\}$, and the CCDM's composition is chosen to minimize per-symbol divergence as in \cite{Schulte2015}. To guarantee that the CCDMs are one-to-one, we use (\ref{eq:ccdm_k_ub}) and Theorem 1 to choose the maximal possible input length $k_{\text{FPA}}$. The results are presented in Fig. \ref{f:ccdm_1}. Due to the low arithmetic precision, the input length of the CCDM with $w\!=\!6$ does not approach the target entropy $\mathbb{H}(P_A)$. Consequently, the normalized divergence for the low-precision CCDM is bounded away from zero and the CCDM is not asymptotically optimal.

It is difficult to see the difference in matching rates of the CCDMs with  $w\!=\!12$ and $w\!=\!18$ due to the scale, see Fig. 2a. However, for long sequences, the divergence of the CCDM with $w\!=\!12$ differs from the divergence of the CCDM with $w\!=\!18$. This happens due to a small (unobservable in Fig. 2a) difference in matching rates. Finally, the CCDM with $w\!=\!18$ performs as well as an IPA CCDM in the range $n \le 10^4$.

Theorem 1 is useful for choosing the lowest-complexity (minimal precision) CCDM for a given target probability and output length. E.g., for $n\!=\!100$, the CCDMs with $w\!=\!12$ and $w\!=\!18$ perform as well as the IPA CCDM. For $n\!>\!1000$, the CCDM with $w\!=\!18$ has significantly lower divergence. We observed the trend that longer output sequences and larger output alphabets require higher precision $w$ to achieve a performance on par with the IPA CCDM.

\section{Conclusions}
We showed how to efficiently implement an AC-DM in FPA, and how to choose the input length for the CCDM to guarantee error-free decoding. The required input length depends on the precision of the arithmetic operations performed by the CCDM implementation. We observe that the precision of $18$ bits allows to achieve a performance on par with the IPA implementation for an alphabet of size $16$ and output sequences of length up to $10^4$ symbols.

\begin{appendices}
\section{Proof of Theorem 1}
We begin with some lemmas.

\begin{lemma}[Properties of $\log_2 (1+\delta x)$]
	Consider the function $f(x) = \log_2 \left( 1 + \delta x \right)$ for $\delta>0, x\ge0$. Then $f$ has the following properties:
		\begin{align}
		x_2 > x_1 \ge 0 &\implies f(x_2)-f(x_1) \le f(x_2-x_1), \label{eq:ap_f_property1} \\
		x_2 \ge 0,  x_1 \ge 0 &\implies f(x_1 + x_2) \le f(x_1) + f(x_2), \\
		x_1, \dotsc, x_k \ge 0 &\implies f\left(\sum_{i=1}^k x_i\right) \le \sum_{i=1}^k f(x_i). \label{eq:ap_f_property3}
		\end{align}
\end{lemma}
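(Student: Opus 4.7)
The plan is to observe that the three claims are nested: inequality (\ref{eq:ap_f_property1}) is the strongest and the other two drop out of it almost for free. Specifically, applying (\ref{eq:ap_f_property1}) with $x_2$ replaced by $x_1 + x_2$ (and the boundary case $x_2 = 0$ handled trivially since $f(0) = 0$) immediately yields $f(x_1 + x_2) - f(x_1) \le f(x_2)$, which is the second property. A straightforward induction on $k$ then propagates the second property to (\ref{eq:ap_f_property3}): writing $S_{k-1} = \sum_{i=1}^{k-1} x_i$, the second property gives $f(S_{k-1} + x_k) \le f(S_{k-1}) + f(x_k)$, and the inductive hypothesis bounds $f(S_{k-1})$ by $\sum_{i=1}^{k-1} f(x_i)$, completing the step.

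So the real work is to prove (\ref{eq:ap_f_property1}). The cleanest route is to exponentiate: since $1 + \delta x_1 > 0$ and $\log_2$ is monotone, the claim $f(x_2) - f(x_1) \le f(x_2 - x_1)$ is equivalent to $1 + \delta x_2 \le (1 + \delta x_1)\bigl(1 + \delta(x_2 - x_1)\bigr)$. Expanding the right-hand side gives $1 + \delta x_2 + \delta^2 x_1 (x_2 - x_1)$, so after cancelling $1 + \delta x_2$ from both sides the inequality reduces to non-negativity of the residual term $\delta^2 x_1 (x_2 - x_1)$, which is immediate from the hypotheses $\delta > 0$, $x_1 \ge 0$, and $x_2 \ge x_1$.

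There is no genuine obstacle here: once the nesting is noticed, the whole lemma collapses to a single one-line algebraic check. The only mild design choice is to prove the strongest statement first and deduce the others, rather than handle each separately and repeat essentially the same algebra three times. It is worth noting in passing that the lemma is really a manifestation of the general fact that a concave function $f$ with $f(0) = 0$ is subadditive; here, however, the direct computation is shorter than invoking concavity.
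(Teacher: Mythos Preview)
Your proof is correct. The paper actually states this lemma without proof, so there is nothing to compare against; your argument fills the gap cleanly. The reduction of all three claims to the single algebraic check $1+\delta x_2 \le (1+\delta x_1)\bigl(1+\delta(x_2-x_1)\bigr)$, equivalent to $\delta^2 x_1(x_2-x_1)\ge 0$, is exactly right, and the nesting you observe (property~(\ref{eq:ap_f_property1}) $\Rightarrow$ subadditivity $\Rightarrow$ $k$-term subadditivity by induction) is the natural organisation.
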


\begin{lemma}[Binary maximizer of the cost function] \label{l:a_binary_miximizer}
	Consider real numbers $x_1 > x_2 > \dotsc > x_k \ge 0$ and the function $f(x) = \log_2 \left( 1 + \delta x \right)$ with $\frac{1}{x_1} \ge \delta >0$. Consider a sequence $\vec{s} \in \{0,1\}^k$ with composition $\gamma(\vec{s})=[n_0,n_1]$ where $n_0 \le n_1$. Define the cost function 
	\begin{equation}
	c(\vec{s}) = \sum_{i=1}^k f\left( \frac{x_i}{n_{s_i} - n_{s_i}(\vec{s}_1^{i-1})}\right). \label{eq:a_cost1}
	\end{equation}	
	Then the maximizer of the cost function is
	\begin{equation}
	\vec{z} = [\underbrace{0 \dots 0}_{n_0} \underbrace{1 \dotsc 1}_{n_1} ] =\argmax_{\vec{s} \in \{0,1\}^k \colon \gamma(\vec{s}) = [n_0,n_1]} c(\vec{s}). \label{eq:a_maximizer}
	\end{equation}
	Furthermore, if $n_0 < n_1$ then the maximizer $\vec{z}$ is unique.
\end{lemma}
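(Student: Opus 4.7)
The plan is to use strong induction on $k=n_0+n_1$, with the base case $k\le 1$ being immediate. For the inductive step, I fix a sequence $\vec{s}$ with composition $[n_0,n_1]$ and $n_0\le n_1$, and split on whether $s_1=0$ or $s_1=1$. If $s_1=0$, then writing $c(\vec{s})=f(x_1/n_0)+c'(\vec{s}_2^{k})$, where $c'$ denotes the cost function of the sub-problem with composition $[n_0-1,n_1]$ on the shifted vector $(x_2,\dotsc,x_k)$, the induction hypothesis (which still applies because $n_0-1\le n_1$) yields $c(\vec{s})\le c(\vec{z})$, with equality iff $\vec{s}_2^{k}=\vec{z}_2^{k}$ whenever $n_0-1<n_1$, handling this case directly.

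If $s_1=1$, the same reasoning pins down the best competitor under this constraint. When $n_0<n_1$, induction gives the unique optimal suffix $[0^{n_0},1^{n_1-1}]$; when $n_0=n_1$, the sub-problem has more zeros than ones, so induction must be applied after exchanging $0\leftrightarrow 1$, producing the unique optimal suffix $[1^{n_1-1},0^{n_0}]$. The best candidate is therefore $\vec{s}^{\dagger}=[1,0^{n_0},1^{n_1-1}]$ when $n_0<n_1$, and $\vec{s}^{\dagger}=[1^{n_1},0^{n_0}]$ when $n_0=n_1$. In the second, symmetric situation a direct inspection shows that $\vec{z}$ and $\vec{s}^{\dagger}$ have identical divisor sequences, so $c(\vec{z})=c(\vec{s}^{\dagger})$ and $\vec{z}$ is still a (non-unique) maximizer.

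The first sub-case requires more work. Only positions $1,\dotsc,n_0+1$ contribute differently in $\vec{z}$ and $\vec{s}^{\dagger}=[1,0^{n_0},1^{n_1-1}]$, because for $i\ge n_0+2$ both sequences have placed the same number of zeros and ones. Expressing each positional difference through the telescoping increment $\Delta_r(x):=f(x/r)-f(x/(r+1))$ and re-indexing gives
\begin{equation*}
c(\vec{z})-c(\vec{s}^{\dagger})=\sum_{r=n_0}^{n_1-1}\bigl[\Delta_r(x_1)-\Delta_r(x_{n_0+1})\bigr]+\sum_{r=1}^{n_0-1}\bigl[\Delta_r(x_{n_0-r+1})-\Delta_r(x_{n_0+1})\bigr].
\end{equation*}

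The main obstacle is proving the right-hand side is strictly positive. A short calculation gives $\frac{d}{dx}\Delta_r(x)=\frac{\delta}{\ln 2}\,\frac{1}{(r+\delta x)(r+1+\delta x)}>0$, so $\Delta_r$ is strictly increasing on $[0,\infty)$. Every $x$-subscript appearing in the brackets lies in $\{1,\dotsc,n_0\}$ and is therefore strictly less than $n_0+1$, making each bracket strictly positive. The first sum is non-empty precisely when $n_0<n_1$, so $c(\vec{z})>c(\vec{s}^{\dagger})$ strictly in that case, ruling out $s_1=1$ for any maximizer and closing the induction. Combined with the $s_1=0$ branch, this proves both the optimality of $\vec{z}$ and its uniqueness when $n_0<n_1$.
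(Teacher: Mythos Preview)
Your proof is correct and follows the same inductive scaffold as the paper: induct on $k$, split on the first symbol, and in the case $n_0<n_1$ reduce everything to the single comparison $c(\vec z)>c(\vec s^{\dagger})$ with $\vec s^{\dagger}=[1,0^{n_0},1^{n_1-1}]$. The paper then bounds the difference using the subadditivity properties of $f$ (its Lemma~1) and finishes with an algebraic check $\delta x_{i+1}<n_1-n_0-1+i$, which is exactly where the hypothesis $\delta\le 1/x_1$ enters.

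Your treatment of that comparison is genuinely different and cleaner. Rewriting the cost difference as a sum of increments $\Delta_r(x)=f(x/r)-f(x/(r+1))$ and observing that $\tfrac{d}{dx}\Delta_r(x)=\tfrac{\delta}{\ln 2}\,\tfrac{1}{(r+\delta x)(r+1+\delta x)}>0$ makes every bracket strictly positive without invoking subadditivity at all. As a bonus, your argument never uses the assumption $\delta\le 1/x_1$, so it actually proves the lemma for every $\delta>0$, whereas the paper's inequality chain needs $\delta x_2<1$. What you lose is only a bit more bookkeeping in the telescoping identity; what you gain is a self-contained calculus argument and a slightly stronger statement.
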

\ifdefined\SHORT
\begin{proof}
	The proof was removed due to the $6$-page limit for submissions.
\end{proof}
\else
\begin{proof}
	The proof of (\ref{eq:a_maximizer}) follows by induction. It is easy to check that for $k=2$ the maximizer admits the form (\ref{eq:a_maximizer}). Thus, we assume that (\ref{eq:a_maximizer}) holds and consider the case of $\vec{s} \in \{0,1\}^{k+1}$.
	
	Consider first the case $n_0 = n_1$.  We have two candidates $\vec{s}_1$ and $\vec{s}_2$ for a maximizer of (\ref{eq:a_cost1}):$\vec{s}_1 = [0 \dotsc 01 \dotsc 1], \vec{s}_2 = [1 \dotsc 10 \dotsc 0].$ This is because a candidate has to start with either $0$ or $1$ and the remaining $k$ bits follow from the inductive assumption for sequences of length $k$. Both candidates $\vec{s}_1$ and $\vec{s}_2$ match (\ref{eq:a_maximizer}), i.e., the $\vec{s}_2$ symbols could be relabeled and achieve the same cost (\ref{eq:a_cost1}).
	
	Suppose next that $n_0 < n_1$. Again, we have only two\footnote{If $n_1 = n_0 +1$ we could have a third candidate $\vec{s}_3$ which would be the same as $\vec{s}_2$ but with all but the first bits inverted. However, the costs of $\vec{s}_2$ and $\vec{s}_3$ are the same due to uniform composition $\vec{\gamma}=[n_0,n_0]$ of the last $k$ symbols. Thus, we refrain from including $\vec{s}_3$ in the analysis.} candidates for a maximizer of (\ref{eq:a_cost1}):$\vec{s}_1 = [0 \dotsc 01 \dotsc 1], \vec{s}_2 = [10 \dotsc 01 \dotsc 1]$. 
	The candidate $\vec{s}_1$ has the desired form (\ref{eq:a_maximizer}) and we will show that it has a larger cost. Consider the cost difference $\Delta c = c(\vec{s}_1) - c(\vec{s}_2)$.
	We reduce the last $n_1-1$ terms of $c(\vec{s}_1)$ and $c(\vec{s}_2)$ that are the same. The difference $\Delta c$ becomes
	\begin{align}
	\Delta c &= \left( \sum_{i=1}^{n_0} f\left(\frac{x_i}{n_0+1-i}\right) -f\left(\frac{x_{i+1}}{n_0+1-i}\right) \right) + \nonumber\\
	&-\left(  f\left(\frac{x_1}{n_1}\right) - f\left(\frac{x_{n_0+1}}{n_1}\right)\right).
	\end{align}
	We can upper bound
	\begin{align}
	f\left(\frac{x_1}{n_1}\right) &- f\left(\frac{x_{n_0+1}}{n_1}\right) \stackrel{(\ref{eq:ap_f_property1})}{\le} f\left(\frac{x_1-x_{n_0+1}}{n_1}\right) \\
	&= f\left(\sum_{i=1}^{n_0}\frac{x_i-x_{i+1}}{n_1}\right) \stackrel{(\ref{eq:ap_f_property3})}{\le} \sum_{i=1}^{n_0} f\left(\frac{x_i-x_{i+1}}{n_1}\right).
	\end{align}
	This results in a lower bound 
	\begin{equation*}
	\Delta c \ge  \sum_{i=1}^{n_0} f\left(\frac{x_i}{n_0+1-i}\right) -f\left(\frac{x_{i+1}}{n_0+1-i}\right) - f\left(\frac{x_i-x_{i+1}}{n_1} \right). 
	\end{equation*}
	We consider the $i$-th summand for $f(x) = \log(1+\delta x)$. It follows that the summands are positive if for all $i=1,\dotsc,n_0$ we have
	\begin{equation*}
	\delta x_{i+1} < n_1 -n_0 -1 +i 
	\end{equation*}
	which is satisfied if $\delta x_{2} < 1$. This follows by the assumption $\frac{1}{x_1} \ge \delta > 0$. Thus, $\Delta c > 0$ and $\vec{s}_1$ is a unique maximizer of (\ref{eq:a_cost1}). \		
\end{proof}	
\fi

\begin{lemma}[Optimality of a greedy maximizer] \label{l:a_greedy_optmizer}
	Consider a sequence $\vec{a} \in \{a_1, \dotsc, a_m\}^n = \mathcal{A}^n$, a composition $\vec{\gamma} = [n_{a_1}, \dotsc, n_{a_m}]$ with $\sum_1^n n_{a_i}=n$, and a scalar function $f(x) = \log_2 \left( 1 + \delta x \right)$ with $\frac{1}{n} \ge \delta >0$. Define the cost function
	\begin{equation}
	c(\vec{s}) = \sum_{i=1}^n f\left( \frac{n+1-i}{n_{s_i} - n_{s_i}(\vec{s}_1^{i-1})}\right) 
	\end{equation}
	and consider the optimization
	\begin{equation}
	\max_{\vec{s} \in \mathcal{A}^n \colon \gamma(\vec{s}) = \vec{\gamma}} c(\vec{z}).
	\end{equation}
	Then  the greedy solution $\vec{z}$ defined below is a global maximizer of the optimization. The next symbol $z_i \in \mathcal{A}$ is obtained by
	\begin{equation}
	z_i = \argmin_{a \in \mathcal{A} \colon n_a > n_a(\vec{s}_1^{i-1})} n_a - n_a(\vec{s}_1^{i-1})
	\end{equation}
	where the constraint ensures that $\vec{z}$ has the required composition $\vec{\gamma}$ and the chosen $z_i$ maximizes the instantaneous cost increment.   	
\end{lemma}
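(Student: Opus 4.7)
I would reduce Lemma 3 to Lemma 2 by an iterative exchange argument. Fix a greedy sequence $\vec{z}$ (breaking ties arbitrarily). Given any sequence $\vec{s} \in \mathcal{A}^n$ with composition $\vec{\gamma}$ that differs from $\vec{z}$, find the first position of disagreement $i^*$, set $a\defeq z_{i^*}$ and $b\defeq s_{i^*}$, and aim to rearrange only the $a$'s and $b$'s inside $\{i^*,\dots,n\}$ so that the matching prefix with $\vec{z}$ grows by at least one without decreasing the cost. Because $\vec{s}$ and $\vec{z}$ coincide on positions $<i^*$, the remaining counts $r_a,r_b$ entering position $i^*$ are identical in both, and the greedy rule gives $r_a\le r_b$.

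\textbf{Reduction to Lemma 2.} Let $j_1=i^*<j_2<\dots<j_K$ enumerate the positions in $\{i^*,\dots,n\}$ at which $\vec{s}$ places $a$ or $b$; then $K=r_a+r_b$. The contribution of these positions to $c(\vec{s})$, namely
\begin{equation*}
\sum_{\ell=1}^{K} f\!\left(\frac{n+1-j_\ell}{n_{s_{j_\ell}}-n_{s_{j_\ell}}(\vec{s}_1^{j_\ell-1})}\right),
\end{equation*}
matches the Lemma 2 cost with $x_\ell=n+1-j_\ell$ (strictly decreasing in $\ell$), binary alphabet $\{a,b\}$, composition $[r_a,r_b]$, and the same $\delta$. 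The Lemma 2 precondition $\delta x_1\le 1$ holds because $x_1=n+1-i^*\le n$ and $\delta\le 1/n$. Since $r_a\le r_b$, Lemma 2 says the restricted cost is maximized by placing the $r_a$ copies of $a$ at $j_1,\dots,j_{r_a}$ and the $r_b$ copies of $b$ at $j_{r_a+1},\dots,j_K$. Apply this permutation to $\vec{s}$ to obtain $\vec{s}'$.

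\textbf{Why the global cost does not decrease.} Because only $a$'s and $b$'s are permuted and only among positions in $\{j_\ell\}$, every position $i\notin\{j_\ell\}$ satisfies $s'_i=s_i\notin\{a,b\}$ and has unchanged running prefix count $n_{s_i}(\vec{s}'_1^{i-1})=n_{s_i}(\vec{s}_1^{i-1})$, so its cost contribution is unchanged; on $\{j_\ell\}$ the cost does not decrease by Lemma 2. Therefore $c(\vec{s}')\ge c(\vec{s})$, the composition $\vec{\gamma}$ is preserved, and $j_1=i^*$ now carries $a=z_{i^*}$, strictly extending the prefix of agreement with $\vec{z}$. Iterating produces $\vec{z}$ in at most $n$ steps with $c(\vec{z})\ge c(\vec{s})$ for every admissible $\vec{s}$, proving global optimality.

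\textbf{Expected obstacle.} The delicate point is the term-by-term identification of the restricted cost on $\{j_\ell\}$ with a genuine Lemma 2 instance: one must check that the remaining count of $a$ (resp.\ $b$) at position $j_\ell$ in the original sequence equals the remaining count after $\ell-1$ placements in the binary subsequence. This holds because between $i^*$ and $j_\ell$ the symbol $a$ (resp.\ $b$) appears in $\vec{s}$ only at $j_1,\dots,j_{\ell-1}$. Once this identification and the decoupling from positions outside $\{j_\ell\}$ are made precise, the scaling check $\delta x_1\le 1$ is immediate and Lemma 2 does the rest.
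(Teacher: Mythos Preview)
Your proposal is correct and follows essentially the same route as the paper: both arguments locate the first position of disagreement with the greedy sequence, isolate the two symbols involved, and invoke Lemma~\ref{l:a_binary_miximizer} on the induced binary subsequence to produce a cost-nondecreasing rearrangement that extends the agreement with $\vec{z}$. The only cosmetic difference is that the paper phrases this as a one-shot contradiction (assuming a maximizer $\vec{s}\neq\vec{z}$ and deriving a strict improvement), whereas you frame it as an explicit iterative improvement; the underlying reduction to Lemma~\ref{l:a_binary_miximizer} and the decoupling of the untouched positions are identical.
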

\ifdefined\SHORT
\begin{proof}
	The proof was removed due to the $6$-page limit for submissions.
\end{proof}
\else
\begin{proof}
	The proof follows by contradiction. We first assume that there exists a maximizer $\vec{s}$ different then $\vec{z}$. Next, we show that $\vec{s}$ can be improved and hence the greedy solution $\vec{z}$ must be optimal.
	Consider the maximizer $\vec{s}$ and let $i_1$ be the first position where $\vec{s}$ and $\vec{z}$ differ. We define $y=s_{i_1}$, $x=z_{i_1}$, and $I_{xy} = \{i\colon s_i =x \lor s_i=y \} = \{i_1, \dotsc, i_k\}$. We construct a sequence $\vec{t}$ with $t_i = s_i$ for $i \notin I_{xy}$ and $(t_{i_1}, t_{i_2}, \dotsc, t_{i_k}) = (x, \dotsc, x, y, \dotsc, y)$, i.e, we have
	\begin{alignat}{6}
	\vec{s} &= [\dotsc &&\overbrace{s_{i_1}}^{=y} \dotsc &&s_{i_2}   \dotsc  &&s_{i_{k-1}} 	\dotsc &&s_{i_k} 	\dotsc &&]\\
	\vec{t} &= [\dotsc &&\;\;x 					 \dotsc  &&x 		 \dotsc  &&y 			\dotsc &&y 			\dotsc &&]. 
	\end{alignat} 
	Sequences $\vec{t}$ and $\vec{s}$ are the same on all positions except for $i \notin I_{xy}$, thus the difference $c(\vec{t})-c(\vec{s})$ depends only on $\vec{s}^\prime = [s_{i_1},\dotsc,s_{i_k}]$ and $\vec{t}^\prime = [t_{i_1},\dotsc,t_{i_k}]$. $\vec{s}^\prime$ and $\vec{t}^\prime$ have the same composition $\vec{\gamma}=[n_x^\prime, n_y^\prime]$ and $n_x^\prime < n_y^\prime$ because $\vec{z}$ is a greedy solution, and $i_1$ is the first position where $\vec{s}$ and $\vec{z}$ differ, i.e., $s_{i_1}=y, z_{i_1}=x$. From Lemma \ref{l:a_binary_miximizer} it follows that  $c(\vec{t})-c(\vec{s}) > 0$. Thus, we improved on the optimal solution $\vec{s}$, which proves the lemma. 
\end{proof}
\fi

Finally, we prove Theorem 1.
We are interested in the solution of 
\begin{equation}
\max_{\vec{c} \in {\mathcal{T}_{\vec{\gamma}}}} \;\; \sum_{i=1}^n \log_2\left(  1 + 2^{-w} \frac{n+1-i}{n_{c_i} - n_{c_i}(\vec{c}_1^{i-1})} \right).
\end{equation}
An FPA implementation of the CCDM requires $2^w \ge n$, since we require $2^w \ge \Theta$ (see (\ref{eq:2w_theta})) and for CCDM we use $\Theta = n-i$. This implies  $\frac{1}{n} \ge \delta=2^{-w} > 0$. From Lemma \ref{l:a_greedy_optmizer}, it follows that a greedy optimizer to the above problem is a global optimizer. Observe that the sequence (\ref{eq:ccdm_worst_case_seq}) is a greedy optimizer.   
\end{appendices}

\bibliographystyle{IEEEtran}
\vspace{-1mm}
\small{
\bibliography{C:/Users/marci/Desktop/CCDM_post_accept/comm}
}

\end{document}